\newcommand{\be}{\begin{equation}}    
\newcommand{\ee}{\end{equation}}    
\newtheorem{dfn}{Definition}[section]
\newtheorem{obs}{Remark}[section]
\newtheorem{thm}{Theorem}[section]
\newtheorem{cor}{Corollary}[section]
\newcommand{\ec}{=\mathrel{\mathop:}}    
\newcommand{\ce}{\mathrel{\mathop:}=}  
\newcommand{\abs}[1]{\left\lvert#1\right\rvert}  
\newcommand{\om}{\kern.5em\overline{\phantom{M}}\kern-1.5em\mathcal{M}}
\newcommand{\vd}{\text{\rm vd}}
\newcommand{\vir}{\text{\rm vir}}
\DeclareMathOperator{\ev}{ev}   
\begin{document}


\title{Topological String Partition Function on Generalised Conifolds}


\author{Elizabeth Gasparim}
\email{etgasparim@gmail.com}
\affiliation{Departamento de Matem\'aticas, Universidad Cat\'olica del Norte, Antofagasta, Chile}

\author{Bruno Suzuki}
\email{obrunosuzuki@gmail.com}
\affiliation{Departamento de Matem\'aticas, Universidad Cat\'olica del Norte, Antofagasta, Chile}

\author{Alexander Torres-Gomez}
\email{alexander.torres.gomez@gmail.com}
\affiliation{Departamento de Matem\'aticas, Universidad Cat\'olica del Norte, Antofagasta, Chile, \, and\\
School of Physics \& Astronomy and Center for Theoretical Physics
Seoul National University, Seoul, Korea, \, and\\
Gauge, Gravity \& Strings, Center for Theoretical Physics of the Universe
Institute for Basic Sciences, Daejeon, Korea}

\author{Carlos A. B. Varea}
\email{carlosbassanivarea@gmail.com}
\affiliation{Instituto de Matem\'atica, Estat\'istica e Computa\c{c}\~{a}o Cient\'ifica, Universidade Estaudal de Campinas, Campinas, Brasil}


\date{\today}

\begin{abstract}
\vspace*{1cm}
We show that the partition function on a generalised conifold $C_{m,n}$ with ${m+n \choose m}$ crepant resolutions can be equivalently computed on the compound du Val singularity  $A_{m+n-1}\times \mathbb C$  with a unique crepant resolution. 
\end{abstract}

\pacs{}

\maketitle


\tableofcontents

\newpage

\section{Introduction}
The topological string partition function with target space on toric singular varieties  was defined in \cite{GKMR} via products of the partition functions on their crepant resolutions.  We could think of the crepant resolutions as all the possible evolutions of a singular toric variety where we went through a topological-changing transition from a singular to a non-singular variety. These resolutions are birationally equivalent but topologically distinct. Before choosing a specific resolution we could say that the variety is in a superposition of the different crepant resolutions. Thus, the singular toric variety free energy is defined as the sum of the contribution for the different topologies, crepant resolutions, and the partition function (that is, the exponential of the free energy) as the product. \\

One of the main motivations behind the work \cite{GKMR} was to check if in the computation of the topological string partition function with target space a generalised conifold, $C_{m,n}:=\{(x,y,z,w)\ | \ xy-z^m w^n=0\}\subset \mathbb{C}^4$, there exists a preferred crepant resolution. In other words, from a probabilistic point of view, if one of the resolutions carries  more importance (weight) than the others on the partition function. What they found out was that there was not such a preferred resolution or, in other words, that each resolution on the superposition of crepant resolutions has the same weight.\\

In this paper we go one step further and claim that although there is no a preferred resolution of $C_{m,n}$ it is possible to define a new conifold $C_{0,m+n}$ with a unique resolution that produces the same topological string partition function. Explicitly, the topological string partition function with target on the conifold $C_{m,n}$ is proportional to the one computed on $C_{0,m+n}$ to some power $d$, see below. Now, the variety $C_{0,m+n}\cong \mathbb C \times A_{m+n-1}$ is the compound du Val singularity $cA_{m+n-1}$, where $A_{m+n-1}=\{(x,y,w)\ | \ xy-w^{m+n}=0\}\subset \mathbb{C}^3$ is the $A_{m+n-1}$ singularity (orbifold).  Note that $cA_{m+n-1}$ has 1-dimensional singularities.\\
 
Thus, in a sense, we found a relation or duality between the topological string partition function with target space on the six dimensional generalised conifold $C_{m,n}$ and the one on the four dimensional $A_{m+n-1}$ orbifold.\\

In general, when we say that there is a duality we mean that there are two equivalent but different descriptions of the same phenomenon, in the sense that observables in both descriptions can be identified. Dualities are ubiquitous in theoretical physics and in particular in string theory. Their importance comes from the fact that when a duality exists, a system which looks extremely difficult to analyse using the current formulation becomes easier when the dual formulation is used. Some examples of dualities are T-duality, AdS/CFT holography, mirror symmetry, AGT correspondence, etc.\\

The kind of relation or duality that we explain in this paper is in the same spirit as T-duality (on the bosonic closed string theory) where two geometric background can be equivalently used to describe an observable as the partition function. Thus, although the dual varieties $C_{m,n}$ and $cA_{m+n-1}$ are geometrically and algebraically distinct, they are nevertheless physically equivalent as target spaces for topological strings on singular varieties.

\subsection{Topological string partition function}
Topological field theories of  Witten type (or cohomological type) have been used since they were discovered to study moduli spaces from a field theoretical point of view. They borrow tools from quantum field theory, which is the framework used to describe the standard model of particle physics, to answer pure mathematical questions.\\

Topological string theory studies maps from a source Riemann surface of genus $g$ with $n$  marked points to a target Calabi--Yau space. There are two types of topological strings known as type A and type B. The type A is about holomorphic maps and the type B about constant maps. The two types can be related by mirror symmetry, that might also  be called a mirror duality. In these article we are concerned with topological string theory of type A. \\

The topological string partition function has been used to study different invariants related to Calabi--Yau target spaces such as the Gromov--Witten invariants which count holomorphic algebraic curves on a Calabi--Yau threefold. Other kinds of invariants that can be described are Gopakumar--Vafa and Donaldson--Thomas. Although, Gopakumar--Vafa and Donaldson--Thomas invariants are not associated to maps between a source and a target spaces it has been shown that they are related to the topological string partition function. In particular, in the toric case it was shown in \cite{MNOP1}, \cite{MNOP2} that Donaldson--Thomas invariants are equivalent to Gromov--Witten invariants.\\

The type A topological string partition function $Z$ is given by the exponential of a generating function $F$, known as free energy in the physics literature, that is, $ Z=\exp(F)$. Now, the generating function $F$ is expressed as sum over the genus $g$ as \cite{CW}
\be
F=\sum_{g\geq 0} \lambda^{2g-2} F_g=\frac{1}{6 \lambda^2} a_{ijk} t^i t^j t^k- \frac{1}{24} b_i t^i+ F_\text{inst} \ ,
\ee
where $\lambda$ is the topological string coupling constant, $t^i$ are K\"ahler parameters of the Calabi--Yau, $a_{ijk}$ are triple intersection numbers, $b_i$ is related to the second Chern class of the Calabi--Yau and $F_\text{inst}$ has a non-polynomial dependence on $t^i$. The term $F_\text{inst}$ can be described by different formal series with coefficients given by the Gromov--Witten invariants or Gopakumar--Vafa invariants, Donaldson--Thomas invariants,
or Pandharipande--Thomas invariants,  which we denote as $F_\text{GW}$ or $F_\text{GV}$ or $F_\text{DT}$, or
$F_\text{PT}$,  respectively \cite{CW}.\\

\paragraph{Gromov--Witten invariants} The Gromov--Witten generating function is

\be
F_\text{GW}(\lambda, Q)=\sum_{g\geq 0} \sum_{\beta \in H_2^+(X)} N_{g,\beta}\ \lambda^{2 g-2} Q^\beta \ , 
\ee
where we have omitted the constants maps. The set of rational numbers $N_{g,\beta}$ are the Gromov--Witten invariants which count the number of holomorphic maps from a genus $g$ Riemann surface whose image is in the second homology class $\beta$ of the Calabi--Yau.  \\

We now recall the mathematical definition of the Gromov--Witten invariants. 
We write $\om_{g,n}(X, \beta)$ for the collection of maps from stable,
$n$-pointed curves of genus $g$ into $X$ for which
\be
  f_*[C] = \beta \in H_2(X;\;\mathbb{Z}) \text{ .} 
 \ee
$\om_{g,n}(X, \beta)$ has a virtual fundamental
class of virtual dimension
\be
\vd = (1-g) (\dim X - 3) - K_X(\beta) + n \text{ .} 
\ee
Consequently,
dimension of the classes $[\om_{g,n}(X,\beta)]^\vir$ is independent of
$\beta$ when $X$ is Calabi--Yau. Moreover, the
unpointed moduli $\om_{g,0}(X,\beta)$ has virtual dimension zero for
all $g$ if $\dim X = 3$, so on a three-dimensional Calabi--Yau,
$\om_{g,0}(X,\beta)$ ``counts" curves.
Here we recall the definition for the genus zero case, and refer the  reader to \cite[\S\,2]{MNOP2} for higher genera.

\begin{dfn}
Assume that the genus $g$ of the curve $C$ is zero, $g(C) = 0$. Let
\be
\ev_i \colon \om_{0,n}(X,\beta) \to X \text{ ,}\quad
   \bigl( f \colon (C; {P}) \to X \bigr) \mapsto f(P_i) 
\ee
be the $i^\text{th}$ evaluation map. Assume that $\sum_{i=1}^n
\deg(\gamma_i) = \vd$ for some $\gamma_i \in H^*(\om_{0,n}(X,
\beta))$. Then the \emph{genus-$0$ Gromov--Witten invariants} are
\be
\langle \gamma\rangle_\beta \ce \ev_1^*(\gamma_1) \cup \dotsb \cup
   \ev_n^*(\gamma_n) \cap [\om_{0,n}(X,\beta)]^\vir \text{ .}
\ee

\end{dfn}

When $\dim X = 3$, $X$ is Calabi--Yau (i.e.\ $K_X=0)$, arbitrary genus
$g$ and $n=0$, we have the \emph{unmarked Gromov--Witten invariants}
\[ N_{g,\beta}(X) \ce \int_{[\kern.3em\overline{\phantom{N}}\kern-.85em\mathcal{M}_{g,0}(X,\beta)]^\vir} 1 \text{ .} \]\\

\paragraph{Gopakumar--Vafa invariants} Counting BPS states on M-theory compactified on a Calabi--Yau treefold we obtain the Gopakumar--Vafa generating function as

\be\label{GV}
F_\text{GV}(\lambda,  Q)=\sum_{h\geq 0} \sum_{\beta \in H_2^+(X)} \sum_{d\geq 1} n_{h,\beta}\frac1 d  \left[ 2 \sin \left( \frac{d \lambda}{2} \right)  \right]^{2 g-2} Q^{d\beta}
\ee
where the   numbers $n_{h,\beta}$ appearing on the right hand side of (\ref{GV}) are called the Gopakumar--Vafa invariants. Although there is no intrinsic definition of the Gopakumar--Vafa invariants they can be defined recursively in terms of Gromov--Witten invariants, nevertheless a conjectural explicit description of the invariants is presented in \cite{MT}. These are conjectured to be integers, and 
from the string theory point of view, they count BPS states corresponding to M2-branes wrapping two cycles of homology class $\beta$ inside the Calabi--Yau.  \\

The GW partition function can be rewritten as \cite{CW}
\be
Z_\text{GW}= \left[ \prod_{k=1}^\infty (1- e^{\pm \lambda k})^{k \cdot n_{0,0}} \right] \left[ \prod_{k\geq1, \beta >0} (1-e^{\pm \lambda k} Q^\beta)^{k \cdot n_{0,\beta}} \right] \left[ \prod_{k\geq1, \beta >0} \prod_{l=0}^{2\beta-2} (1-e^{\pm \lambda (h-l-1)} Q^\beta)^{(-1)^{h+l}\cdot {2h-2 \choose l}\cdot n_{h,\beta}} \right].
\ee\\

\paragraph{Donaldson--Thomas invariants} Here we work with threefolds without compact 4-cycles. Even though Donaldson-Thomas invariants are defined for more general threefolds, we worked only  in the cases when there are no 4-cycles since this greatly simplifies the calculations.
Then, from the physical perspective, the Donaldson--Thomas invariants count the number of D6-D2-D0 BPS bound states, resulting from D6-branes wrapped on the Calabi--Yau threefold and D2-brane wrapped on a second homology class cycle $\beta \in H_2(X, \mathbb Z)$.\\

The Donaldson--Thomas representation of the partition function $Z_\text{DT}=\exp(F_\text{DT})$ is
\be \label{Z-DT}
Z_{\text{DT}}(X;q,v) = \sum_{\beta} \sum_{n\in\mathbb{Z}} \widetilde{N}_{n,
   \beta}(X) q^n v^\beta  \ ,
\ee
where the set of integers $ \widetilde{N}_{n,\beta}(X)$ are the Donaldson--Thomas invariants.\\

We now recall the mathematical definition of the Donaldson--Thomas invariants. As usual $\mathcal O_X$ denotes the structure sheaf of an algebraic variety $X$. An \emph{ideal subsheaf} $\mathcal{I}$ of $\mathcal{O}_X$ 
is 
a torsion-free rank-$1$ sheaf with trivial determinant. It follows that $\mathcal{I}^{\vee\vee}\cong \mathcal{O}_X$. 
Thus the evaluation map determines a quotient
\be \label{eq.ideal}
  0 \longrightarrow \mathcal{I} \xrightarrow{\ \ev\ } \mathcal{I}^{\vee\vee}
  \cong \mathcal{O}_X \longrightarrow \mathcal{O}_X\bigl/\mathcal{IO}_X = \imath_*\mathcal{O}_Y
  \longrightarrow 0 \text{ ,}
\ee
where $Y \subseteq X$ is the support of the quotient and
$\mathcal{O}_Y \ce (\mathcal{O}_X\bigl/\mathcal{IO}_X)\rvert_Y$ is the
structure sheaf of the corresponding subspace. Let $[Y] \in H_2(X;\;\mathbb{Z})$
denote the cycle 
determined by the $1$-dimensional components of $Y$.
We denote by
\be
I_n(X,\beta) 
\ee
the Hilbert scheme of ideal sheaves $\mathcal{I} \subset \mathcal{O}_X$
for which the quotient $Y$ in \eqref{eq.ideal} has dimension at most
$1$, $\chi(\mathcal{O}_Y) = n$ and $[Y] = \beta \in H_2(X; \;
\mathbb{Z})$.
If $X$ is a smooth, projective Calabi-Yau threefold,
then the virtual dimension of $  I_n(X,\beta)$  is zero, and we write
\be
\widetilde{N}_{n,\beta}(X) \ce \int_{[I_n(X,\beta)]^\vir} 1 
\ee
for the number of such ideal sheaves. These numbers can be assembled into a partition function, as in \eqref{Z-DT},
\be
 Z_{\text{DT}}(X;q,v) = \sum_{\beta} \sum_{n\in\mathbb{Z}} \widetilde{N}_{n,
   \beta}(X) q^n v^\beta = \sum_\beta Z_{\text{DT}}(X;q)_\beta v^\beta \text{ .} 
\ee
The degree-$0$ term is
\be
Z_{\text{DT}}(X;q)_0 = \sum_{n\geq0} \widetilde{N}_{n, 0}(X) q^n \text{.}
\ee
In the case of a  toric CY threefold $S$ such as  local curves or a local surfaces, 
one considers a toric compactification $X$ of $S$, and  then define 
\be
Z'_{\text{DT}}(S;q)_\beta =Z_{\text{DT}}(X;q)_\beta/Z_{\text{DT}}(X;q)_0\text{,}
\ee
see \cite[Sec.\thinspace 4.1]{MNOP1}. The invariants are then computed by localisation, 
using  weights corresponding to vertices and edges whose support does not intersect $X\setminus S$.\\

\paragraph{Pandharipande--Thomas invariants} Another curve counting invariant is defined in \cite{PT} 
using stable pairs, their definition is presented for projective varieties, but 
we find that it can be applied without troubles to the quasi-projective threefolds we consider, as our 
counting is determined on small neighbourhoods of curves, namely the exceptional sets of the CY resolutions. 
Given a threefold $X$,  consider the moduli space $P(X)$ of stable pairs $(F, s)$
where $F$ is a sheaf of fixed Hilbert polynomial supported in dimension
1 and $s \in H^0(X, F)$ is a section. The required  stability conditions are:
(i) the sheaf $F$ is pure, and
(ii) the section $s\colon \mathcal O_X \rightarrow  F$
 has 0-dimensional cokernel.
 Here purity  means every nonzero subsheaf of $F$ has support
of dimension 1. Let $P_n(X,\beta)$ denote the moduli space of semistable pairs $(s,F)$
where $F$ is a pure sheaf of Hilbert polynomial 
$$\chi(F(k))= k \int_\beta c_1(F) +n.$$
Given a class $\beta \in H_ 2(X, \mathbb Z)$, the stable pair invariants $P_{n, \beta}$   are by definition
the degree of the virtual cycle 
$$P_{n, \beta} = \int_{P_n(X,\beta)}1.$$
The partition function for the Pandharipande--Thomas theory is 
$$Z_{\text{PT}}(X;q) = \sum_n P_{n, \beta} \,  q^n.$$

The DT/PT correspondence proved in \cite{Br,To} gives 
$$Z_{\text{PT}}(X;q) = \frac{Z_{\text{DT}}(X;q)}{Z_{\text{DT}}(X;q)_0}.$$

\subsection{Summary of results}

The main result of this paper is that, given a generalised conifold $C_{m,n}:=\{(x,y,z,w)\ | \ xy-z^mw^n=0\}\subset \mathbb{C}^4$, the topological string partition function on $C_{m,n}$ is proportional to the one on the compound du Val singularity $cA_{m+n-1}$ to the power of $d$ (with $d$ computed from $m$ and $n$, see Theorem \ref{thm1}), that is,
\be
Z_\text{tot}(C_{m,n}) \sim (Z(cA_{m+n-1}))^d \ ,
\ee
where $Z_\text{tot}(C_{m,n})$ is defined as the product of partition functions of its crepant resolutions, 
\be
Z_\text{tot}(C_{m,n})=\prod_{i=1}^{N_\Delta} \, Z(C^{i}_{m,n}) \ .
\ee
Note that  $C_{m,n}$ has $N_\Delta={ m+n \choose m}$ crepant  resolutions and $cA_{m+n-1}$ has only one \cite{GKMR}. Thus, we have obtained a method to simplify considerably the computation of the topological string partition function on generalised conifolds, instead of computing ${ m+n \choose m}$ partition functions it is only necessary to compute one.\\

The procedure to find this result is described schematically on Figure \ref{diag}.\\

\begin{figure}
\begin{center}
\includegraphics[width=0.85\textwidth]{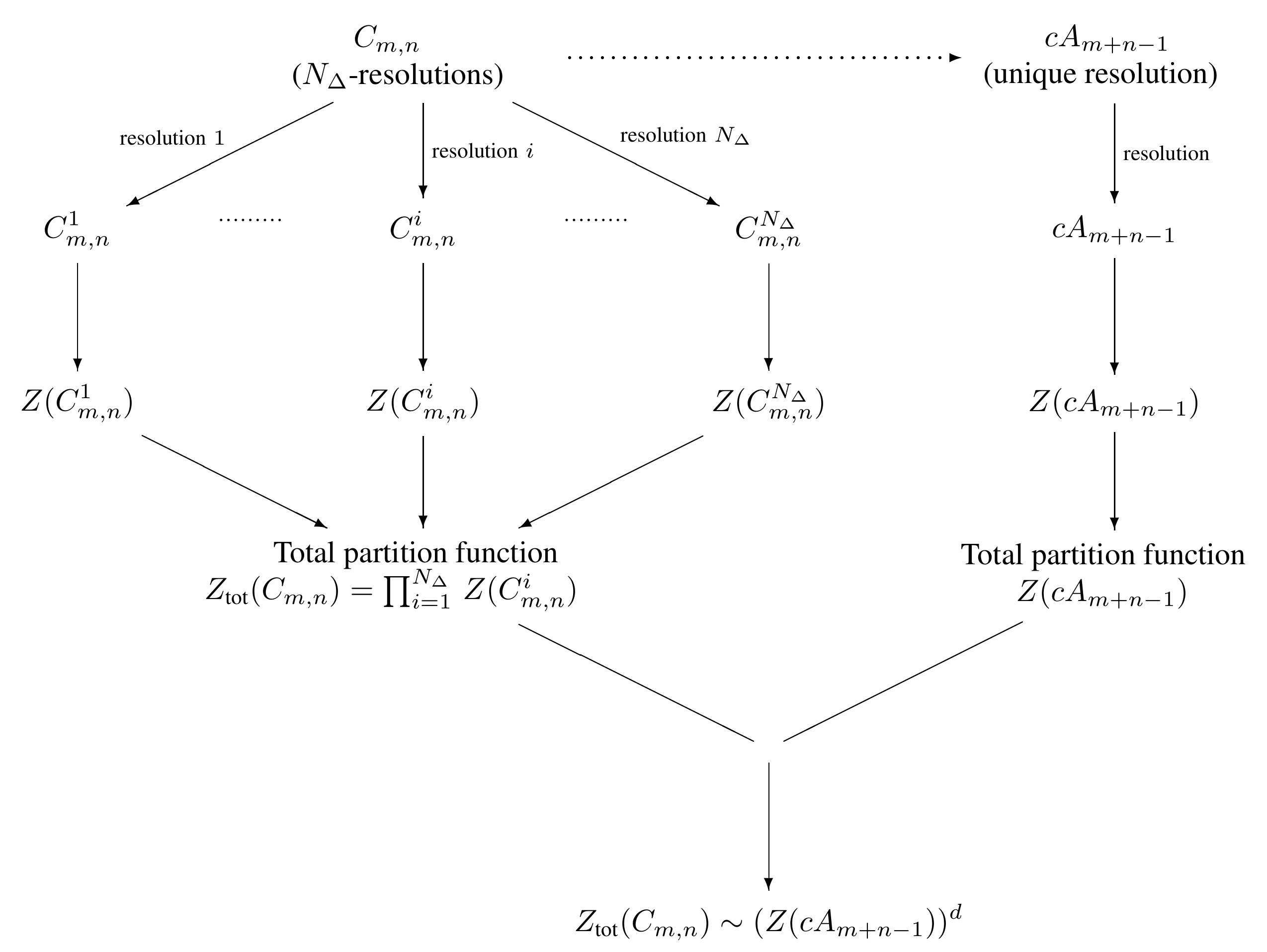}
\caption{Schematic representation of the steps followed to compute the partition function on the dual conifolds}
\label{diag}
\end{center}
\end{figure}

Apart from the computational advantage, we would like to stress that there is a duality between the topological string partition functions computed on target spaces $C_{m,n}$ and $cA_{m+n-1}$. From the physical point of view, we are relating a topological string theory with two different target spaces. Note that the threefold $cA_{m+n-1}$ is actually a product $A_{m+n-1} \times \mathbb C$ 
of a singular surface and a line, thus two real dimensions decouple and we are left with a theory 
in 4 dimensions. In particular, the  partition function can be computed on a six dimensional generalised conifold $C_{m,n}$ or equivalently on the four dimensional $A_{m+n-1}$ orbifold (in section \ref{FR}, we give an explanation for this decoupling from the perspective of Gromov-Witten theory on $A_n \times \mathbb C$). Similar to what happens for T-duality in string theory we obtain two dual geometric target spaces. From the mathematical point view, we expect that there is an intimate relation between some topological invariants (for example, Gromov--Witten, Gopakumar--Vafa, Donaldson--Thomas invariants) defined on these two duals conifolds.

\section{Partition function on singular Calabi--Yaus} 
Given  the topological string partition functions for  theories with target  on smooth  varieties, the corresponding partition functions for theories with target on singular varieties were defined in \cite{GKMR}. 
Their method was to take an average over  partition functions calculated over all crepant resolutions. 
Although the same method applies whenever the singular space has a finite number of resolutions, 
their focus was on Calabi--Yau threefolds. We first recall their construction. 

\subsection{Toric Calabi--Yau threefolds}

Suppose we have a singular Calabi--Yau space $X$ and a finite
collection of crepant resolutions $X^t \to X$ for index $t \in
\mathcal{T}$, $\abs{\mathcal{T}} < \infty$. Assume further that
a partition function $Z_\text{old}(Y; Q, \dotsc)$ defined
for  smooth Calabi--Yau varieties $Y$ is known, where $Q = (Q_1, Q_2, \dotsc)$
are formal variables corresponding to a basis of $H_2(Y;\mathbb{Z})$.
Additionally, we suppose that $H_2(X^{s};\mathbb{Z}) \cong H_2(X^{t};\mathbb{Z})$
for all $s,t \in \mathcal{T}$.
That is, since, for the varieties we consider, all crepant resolutions have isomorphic second homologies, we fix an identification of their basis elements according to formula (\ref{idenQs}). This is simply a formal linear identification of the generators of the second homology groups, which does not require considerations of change of variables. It is worth noticing that the identification we use here is different from the change of variables often used in the literature when considering the basic flop, which inverts the Q variables, such as in [Sz]. As (\ref{idenQs}) shows, we do not invert any of the variables. Our choice just identifies the basis of second homology groups of crepant resolutions formally for the purpose of analysing the resulting new partition functions.\\

We define a new partition function $Z_\text{new}$ for $X$ as follows.
Firstly, we identify the formal variables $Q$ among all the
resolutions. That is, we identify
\be \label{idenQs}
Q_i^s = Q_i^t \ec Q_i \text{ \ for all \ } s,t \in \mathcal{T} \text{ .} 
\ee
More explicitly, for each resolution we label the edges of the corresponding triangulation so that $Q_i$ denotes the $i$-th interior edge when reading from left to right. Then, the isomorphism between $H_2(X^{s};\mathbb{Z})$ and $H_2(X^{t};\mathbb{Z})$ sends the class of the $i$-th interior edge in the
triangulation depicting $X^{s}$ to the class of the $i$-th interior edge in the triangulation depicting $X^{t}$ (for example, in figure \ref{resolutions-C23} each triangulation has four internal edges, drawn in red or green, which are labelled $Q_1, Q_2, Q_3, Q_4$ from left to right). Secondly, we define
\be \label{define}
Z_\text{new}(X;Q,\dotsc) \ce \prod_{t \in \mathcal{T}} Z_\text{old}(X^t; Q^t, \dotsc) \ ,
\ee
where $\dotsc$ stands for other possible variables that the partition function could depend on. The new partition function captures  information from all possible
resolutions of $X$, and thus can be regarded as describing the singular
variety $X$ itself.\\

The main result of \cite{GKMR} shows that the new partition function is 
homogeneous (see definition \ref{homogeneous} below). We will   show that this result implies that cones on weighted projective spaces (or more precisely, on abelian two dimensional orbifolds) have a privileged role amongst Calabi--Yau threefolds, in a sense that we will make precise using the topological string partition functions. \\

Let us consider the case of  toric Calabi--Yau threefolds without 
compact 4-cycles. Such a   threefold   $X$ corresponds  to a chain of $\mathbb {P}^1$'s. Over these $\mathbb {P}^1$'s we have the line bundles  $\mathcal O(-1) \oplus \mathcal O(-1) \to \mathbb {P}^1$ or $\mathcal O(-2) \oplus \mathcal O \to \mathbb {P}^1 $. The topological string partition function in this case is given by \cite{AKMV, IKP}
\be
Z_{\text{top}}(-q,Q)=(M(q))^{\chi} \; Z_\text{PT}(-q,Q) \ ,
\ee
where $M(q)$ is the MacMahon function
\be\label{MacMahon}
M(q):=\prod_{k=1} ^{\infty}\frac{1}{(1-q^k)^k} \ ,
\ee
and  the Pandharipande--Thomas partition function  is  (see section 2.4 of \cite{OSY})
\be
Z_{\text{PT}} (-q,Q) =\prod_{1\leq i\leq j\leq \chi-1} \prod_{k=1} ^{\infty}(1-Q_iQ_{i+1}\cdots Q_jq^k)^{-ks_is_{i+1}\cdots s_j} \  .
\ee
Moreover, $\chi$  is the Euler characteristics of $X$, $Q_1, \dots,Q_{\chi-1}$ are related to the
the K\"ahler parameters (we have $Q_i=\exp(-t^i)$ where $t^i$ are the K\"ahler parameters) associated to the $\mathbb P^1$'s, and from the physical perspective $q$ is related to the topological string coupling constant. Note that  the number of $\mathbb {P}^1$'s is $\chi-1$. There are two possible values for $s_i$, that is, $s_i = -1$ or $+1$ depending on whether the $i$-th $\mathbb {P}^1$ is resolved by $\mathcal O(-1) \oplus \mathcal O(-1) \to \mathbb {P}^1$ or $\mathcal O(-2) \oplus \mathcal O \to \mathbb {P}^1 $.
Note that  our identification of variables $Q_i$ in (\ref{idenQs}) results in also  identifying  the K\"ahler parameters $t_i$, thus giving a  different result in comparison to   the usual identification done using the flop, which changes the signs of the $t_i$'s.
\begin{dfn}\label{homogeneous}
A partition function $Z(q,Q)$ of variables $Q=(Q_1,Q_2,\cdots )$ is called \emph{homogeneous of degree $d$} if  it
 has the form
\be
Z(-q,Q)=\left( \prod_{1\leq i\leq j\leq \chi-1} \prod_{k=1} ^{\infty}(1-Q_iQ_{i+1}\cdots Q_jq^k)^{-k} \right)^d .
\ee
\end{dfn}

\begin{obs}
The definition of homogeneous partition function given in [GKMR] is slightly different from  ours. Here we 
have made a correction, namely to remove the MacMahon factor \eqref{MacMahon} to the power $\chi$  that appears in $Z$ but not in $Z_{\text{PT}} $. Such factor appears in each expression of Z for a given singular threefold as many times as the number of its crepant resolutions. Their main result  states that the new  partition function they define is homogeneous, 
but the precise statement actually requires our definition \ref{homogeneous}.
\end{obs}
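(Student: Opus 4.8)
The assertion to be justified is that $Z_\text{tot}(C_{m,n})=\prod_{t\in\mathcal T}Z_\text{top}(C_{m,n}^t)$, once the MacMahon contribution is extracted, is homogeneous of degree $d$ in the sense of Definition \ref{homogeneous}, with $d$ depending on $m,n$ only. The plan is to first separate, on each of the $N_\Delta=\binom{m+n}{m}$ crepant resolutions, the two factors $Z_\text{top}(C_{m,n}^t)=(M(q))^{\chi}Z_\text{PT}(C_{m,n}^t)$, and to note that the Euler characteristic $\chi=m+n$ is common to all resolutions (every triangulation carries $\chi-1=m+n-1$ interior $\mathbb{P}^1$'s). Multiplying over $t$ as in \eqref{define} then gives $Z_\text{tot}(C_{m,n})=(M(q))^{\chi N_\Delta}\prod_t Z_\text{PT}(C_{m,n}^t)$, which makes the Remark's point precise: the factor \eqref{MacMahon} is reproduced once per resolution, hence occurs with total power $\chi N_\Delta$, and since no power of $M(q)$ has the product shape of Definition \ref{homogeneous} it must be stripped before homogeneity can even be stated. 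What remains is to show that $\prod_t Z_\text{PT}(C_{m,n}^t)$ is homogeneous of degree $d$.

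Next I would read off, from the product formula for $Z_\text{PT}(-q,Q)$, that the exponent of the factor $(1-Q_iQ_{i+1}\cdots Q_jq^k)$ in $\prod_t Z_\text{PT}$ equals $-k\sum_{t\in\mathcal T}\prod_{l=i}^{j}s_l^{t}$. Homogeneity of degree $d$ is therefore \emph{equivalent} to the single statement that $D_{ij}\ce\sum_{t}\prod_{l=i}^{j}s_l^{t}$ is independent of the pair $(i,j)$, the common value being $d$. To evaluate $D_{ij}$ I would use the dictionary between crepant resolutions of $C_{m,n}$ and words $\sigma\in\{+1,-1\}^{m+n}$ having exactly $m$ letters $+1$ and $n$ letters $-1$ (recovering $\abs{\mathcal T}=\binom{m+n}{m}$); under it the $i$-th interior curve is an $\mathcal O(-1)\oplus\mathcal O(-1)$ curve, so $s_i=-1$, exactly when the neighbouring letters $\sigma_i,\sigma_{i+1}$ differ, and an $\mathcal O(-2)\oplus\mathcal O$ curve, so $s_i=+1$, when they agree. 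For a fixed word $\sigma$, writing $\epsilon_l\ce\sigma_l$, this rule is precisely $s_l=\epsilon_l\epsilon_{l+1}$, whence the product telescopes: $\prod_{l=i}^{j}s_l=\epsilon_i\,\epsilon_{j+1}$.

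The crux is then to show that $\sum_\sigma\epsilon_i\,\epsilon_{j+1}$ depends only on $m,n$. Setting $a=i$, $b=j+1$ (two distinct positions) and counting words by the signs at $a$ and $b$, I expect to find $\sum_\sigma\epsilon_a\epsilon_b=\binom{m+n-2}{m-2}+\binom{m+n-2}{m}-2\binom{m+n-2}{m-1}$, which visibly involves neither $a$ nor $b$; this common value is $d$ (for instance $d=-2$ for the conifold $C_{1,1}$). The hard part will not be this arithmetic but establishing the geometric dictionary rigorously — that in \emph{every} triangulation of $C_{m,n}$ the type of the $i$-th interior $\mathbb{P}^1$, and hence $s_i$, is governed by the equal/unequal rule on adjacent letters — since the whole telescoping collapse, and with it the position-independence, rests on it. I would pin this down from the toric fan of $C_{m,n}$, where each triangulation is a monotone lattice path, the interior edges correspond to the $m+n-1$ steps, and an $\mathcal O(-1)\oplus\mathcal O(-1)$ curve occurs exactly at a change of step direction.

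Finally, assembling the pieces, each factor $(1-Q_i\cdots Q_jq^k)$ carries the constant exponent $-kd$, so $\prod_t Z_\text{PT}(C_{m,n}^t)=\left(\prod_{1\le i\le j\le\chi-1}\prod_{k\ge1}(1-Q_i\cdots Q_jq^k)^{-k}\right)^{d}$ is homogeneous of degree $d$ in the sense of Definition \ref{homogeneous}, which is the assertion of the Remark. The same word-model makes the link to the main theorem transparent: the unique resolution of $cA_{m+n-1}=C_{0,m+n}$ is the constant word, all of whose adjacent letters agree, so every $s_i=+1$ and $Z(cA_{m+n-1})$ is itself the degree-one generator of Definition \ref{homogeneous}; comparing exponents yields $Z_\text{tot}(C_{m,n})\sim(Z(cA_{m+n-1}))^{d}$, the prefactor $(M(q))^{\chi N_\Delta}$ being absorbed into the proportionality ``$\sim$''.
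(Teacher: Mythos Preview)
Your first paragraph is exactly what the Remark calls for, and it coincides with what the paper actually does: the paper records the Remark without proof and then, in the immediately following Corollary, separates $Z(X^t;-q,Q)=(M(q))^{\chi}\,Z_{\text{PT}}(X^t;-q,Q)$ on each resolution and multiplies over the $\binom{m+n}{n}$ resolutions to obtain the global MacMahon prefactor $(M(q))^{\chi\binom{m+n}{n}}$. That is the entire content of the Remark as the paper treats it.

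Everything from your second paragraph onward goes well beyond the Remark: you supply a self-contained proof of the homogeneity theorem (the paper's Theorem~\ref{thm1}), whereas the paper simply \emph{cites} [GKMR] for that result and does not reproduce the argument. Your route---encoding resolutions as words $\sigma\in\{+1,-1\}^{m+n}$ with $m$ pluses and $n$ minuses, observing $s_l=\epsilon_l\epsilon_{l+1}$ so that $\prod_{l=i}^{j}s_l=\epsilon_i\epsilon_{j+1}$ telescopes, and then checking that $\sum_\sigma\epsilon_a\epsilon_b=\binom{m+n-2}{m-2}+\binom{m+n-2}{m}-2\binom{m+n-2}{m-1}$ is position-independent---is correct and is in fact the combinatorial core of the [GKMR] proof, carried out cleanly. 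The dictionary you flag as ``the hard part'' (that the $i$-th interior curve has normal bundle $\mathcal O(-1)\oplus\mathcal O(-1)$ precisely when the lattice path changes direction at step $i$) is standard toric geometry for the strip and is exactly what [GKMR] invokes. Your closing paragraph then recovers the paper's Theorem~\ref{thm2} as well. So: your argument is correct, but where the paper outsources the homogeneity to [GKMR] and keeps only the MacMahon bookkeeping, you have reproved the whole chain; what you gain is self-containment, what the paper gains is brevity.
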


\subsection{Generalised conifolds}

Given a pair of non-negative integers $m,n$, not both zero, we consider the toric varieties 
\be 
C_{m,n}:=\{(x,y,z,w)\ | \ xy-z^mw^n=0\}\subset \mathbb{C}^4=\text{Spec} \, \mathbb{C}[x,y,z,w]. 
\ee
We refer to this specific type of toric varieties as generalised conifolds.
\begin{dfn}\label{curvecounting}
A partition function for a Calabi--Yau manifold $Y$ is of \emph{curve-counting type} if it can be expressed in terms of the Donaldson--Thomas, Gromov--Witten or Gopakumar--Vafa partition function up to a factor depending only on the Euler characteristic of $Y$.
\end{dfn}


\begin{thm}\cite{GKMR}
\label{thm1}
Let $X$ be a toric Calabi--Yau threefold defined as a subset of $\mathbb{C}^4$ by $X=\mathbb{C}[x,y,z,w]/\langle xy-z^mw^n\rangle$, where $m$ and $n$ are integers not both zero. Let $Z(X^t;q,Q)$ be a partition function of curve-counting type. Then the total partition function 
\be
  Z_{\text{tot}}(X;q,Q):=\displaystyle\prod_{t \in \mathcal T}Z(X^t;q,Q), 
\ee
is homogeneous of degree
\begin{equation}
 d= \frac{(m+n-2)!}{m!n!}(m^2-m+n^2-n-2mn) \ ,
\end{equation}
when $m+n\geq 2$, otherwise $d=0$. 
\end{thm}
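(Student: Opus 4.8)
The plan is to reduce the statement to a purely combinatorial identity for the signs $s_i$ attached to the chain of $\mathbb{P}^1$'s, and then to evaluate the resulting sum by elementary counting. First I would recall that the $\binom{m+n}{m}$ crepant resolutions of $C_{m,n}$ are indexed by words $w=w_1 w_2\cdots w_{m+n}$ in two letters, say $A$ and $B$, with exactly $m$ letters $A$ and $n$ letters $B$; combinatorially these encode the triangulations of the toric strip. Each resolution carries a chain of $\chi-1=m+n-1$ rational curves, and the sign $s_i\in\{\pm1\}$ of the $i$-th curve is fixed by the adjacent pair $(w_i,w_{i+1})$: one has $s_i=+1$ (an $\mathcal O(-2)\oplus\mathcal O$ curve) when $w_i=w_{i+1}$, and $s_i=-1$ (an $\mathcal O(-1)\oplus\mathcal O(-1)$ curve) when $w_i\neq w_{i+1}$. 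I would check this dictionary against the degenerate cases $C_{1,1}$ (the ordinary conifold, a single $(-1,-1)$ curve) and $C_{m,0}\cong A_{m-1}\times\mathbb C$ (a chain of $(-2,0)$ curves). Justifying this sign dictionary from the toric data is the geometric input on which everything rests.

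Since $Z$ is of curve-counting type and all resolutions share the Euler characteristic $\chi=m+n$, the MacMahon prefactors $M(q)^{\chi}$ are common to every factor and are discarded in the corrected notion of homogeneity (see the Remark above). Taking the product over $t\in\mathcal T$ therefore merely adds the exponents in the Pandharipande--Thomas form, so that the factor $(1-Q_iQ_{i+1}\cdots Q_j q^k)^{-k}$ occurs in $Z_{\text{tot}}$ with total exponent
\be
S_{i,j}:=\sum_{w}\,\prod_{l=i}^{j} s_l^{(w)} \ ,
\ee
the sum running over all admissible words $w$. By Definition \ref{homogeneous}, homogeneity of degree $d$ is then exactly the assertion that $S_{i,j}$ is independent of the interval $[i,j]$ and equals $d$.

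The step I expect to be the crux is the observation that the product of signs collapses to a parity of the endpoints. Since $s_l^{(w)}=-1$ precisely when $w$ changes letter between positions $l$ and $l+1$, one has $\prod_{l=i}^{j}s_l^{(w)}=(-1)^{\#\{l:\,w_l\neq w_{l+1},\ i\le l\le j\}}$, and the parity of the number of letter-changes across $[i,j]$ is governed solely by the two ends: the product equals $+1$ if $w_i=w_{j+1}$ and $-1$ otherwise. Writing $p=i$, $q=j+1$ with $1\le p<q\le m+n$, the interval-dependence dissolves into a dependence on whether two fixed positions carry the same letter, so that $S_{i,j}=\#\{w:w_p=w_q\}-\#\{w:w_p\neq w_q\}$, manifestly independent of $p,q$. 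This is precisely the mechanism forcing homogeneity.

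It then remains to count words. Fixing the letters in positions $p,q$ and distributing the remaining $m+n-2$ letters gives
\be
S_{i,j}=\binom{m+n-2}{m-2}+\binom{m+n-2}{m}-2\binom{m+n-2}{m-1}\ ,
\ee
which I would rewrite over the common denominator $m!\,n!$ via $\binom{m+n-2}{m-2}=\tfrac{(m+n-2)!}{m!\,n!}\,m(m-1)$ together with the analogous identities for the other two terms, obtaining $S_{i,j}=\tfrac{(m+n-2)!}{m!\,n!}(m^2-m+n^2-n-2mn)=d$. Finally I would dispose of the boundary cases: when $m+n\ge 2$ the conventions $\binom{N}{k}=0$ for $k<0$ keep the formula valid even if $m$ or $n$ equals $0$ or $1$ (I would verify $C_{1,n}$ directly as a sanity check, where the formula yields $d=n-3$), while for $m+n<2$ there are no curves, the product defining the homogeneous form is empty, and $d=0$ trivially.
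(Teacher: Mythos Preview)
Your argument is correct, and indeed is essentially the same combinatorial computation carried out in \cite{GKMR}; note that the present paper does not reprove Theorem~\ref{thm1} at all but simply cites that reference, so there is no independent proof here to compare against. The two ingredients you isolate --- the telescoping identity $\prod_{l=i}^{j}s_l^{(w)}=+1$ iff $w_i=w_{j+1}$, and the resulting position-independence of $S_{i,j}$ by symmetry of the word count --- are exactly the mechanism used in \cite{GKMR}, and your closed-form evaluation via $\binom{m+n-2}{m-2}+\binom{m+n-2}{m}-2\binom{m+n-2}{m-1}$ matches theirs. The one point I would tighten is the geometric dictionary: you flag it as ``the input on which everything rests'' but only verify it on $C_{1,1}$ and $C_{m,0}$; a cleaner justification is to read the normal bundle of the $i$-th $\mathbb P^1$ directly from the two adjacent triangles in the triangulation (parallel edges give $\mathcal O(-2)\oplus\mathcal O$, antiparallel give $\mathcal O(-1)\oplus\mathcal O(-1)$), which is exactly the $w_i=w_{i+1}$ versus $w_i\neq w_{i+1}$ dichotomy. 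With that made explicit, your write-up would stand as a self-contained proof.
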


See \cite{GKMR} for a proof of this theorem and further details.\\

Now consider the threefold $C_{0,n}\cong cA_{n-1}$ given by $xy-w^n=0$. Section 4 of \cite{GKMR} observes that $C_{0,n}$ are quotients of $\mathbb{C}^3$ by $\mathbb{Z}/n\mathbb{Z}$ acting on a two-dimensional subspace $\mathbb{C}^2$ as $(a,b,c)\mapsto (\varepsilon a, \varepsilon^{-1}b, c)$, with $\varepsilon^n=1$. These spaces have $1$-dimensional singularities, as $C_{0,n}\cong A_{n-1} \times \mathbb{C}$, where $A_{n-1=}\{(x,y,z)\ | \ xy-w^n=0\}\subset \mathbb{C}^3$ is the $A_{n-1}$ orbifold (with a singular point at the origin). They play an important role amongst all the $C_{m,n}$ threefolds.

\begin{cor}\label{cor1}
Let $X$ be a toric Calabi--Yau threefold defined as a subset of $\mathbb{C}^4$ by $X=\mathbb{C}[x,y,z,w]/\langle xy-z^mw^n\rangle$, where $m$ and $n$ are non-negative integers. Let $Z( X ;q,Q)$ be a partition function of curve-counting type. Then the total partition function is given by 
\be
Z_\text{tot}(X;-q,Q)=\left( \left( M(q)\right)^{\chi}\right)^{m+n\choose n}\prod_{t \in \mathcal T}Z_{\text{PT}} (X^t;-q,Q).
\ee
where  $M(q)$ is the MacMahon function and $Z_{\text{PT}} (X^t;q,Q)$ represents the Pandharipande--Thomas partition function of the crepant resolution $t$.
\end{cor}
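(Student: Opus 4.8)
The plan is to reduce the corollary to the single-resolution relation between the curve-counting partition function and the Pandharipande--Thomas partition function, and then multiply over the finite set of crepant resolutions, pulling out a common MacMahon prefactor.

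First I would fix an arbitrary crepant resolution $X^t \to X$. Each such $X^t$ is a smooth toric Calabi--Yau threefold without compact $4$-cycles, so the curve-counting partition function is governed by the factorisation recalled above,
\be
Z(X^t;-q,Q)=\left(M(q)\right)^{\chi}\,Z_\text{PT}(X^t;-q,Q),
\ee
with $M(q)$ the MacMahon function and $\chi$ the Euler characteristic. If one instead starts from $Z_\text{DT}$, $Z_\text{GW}$ or $Z_\text{GV}$, Definition \ref{curvecounting} together with the GW/DT correspondence of \cite{MNOP1,MNOP2} and the DT/PT correspondence of \cite{Br,To} still yields this identity: dividing out the degree-zero Donaldson--Thomas series $Z_\text{DT}(X^t;q)_0=\left(M(-q)\right)^{\chi}$ produces exactly the MacMahon prefactor, and the substitution $q\mapsto -q$ matches it with the stated $Z_\text{top}$ formula.

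Second I would observe that all crepant resolutions $X^s,X^t$ are connected by flops and hence share a common Euler characteristic $\chi(X^s)=\chi(X^t)\ec\chi$; for $C_{m,n}$ this common value is $\chi=m+n$, one more than the number $m+n-1$ of interior $\mathbb P^1$'s. Because the prefactor $\left(M(q)\right)^{\chi}$ is therefore identical for every resolution, it pulls out of the product uniformly. Multiplying the single-resolution identity over all $t\in\mathcal T$ and using $\abs{\mathcal T}=\binom{m+n}{n}$ (\cite{GKMR}) gives
\be
Z_\text{tot}(X;-q,Q)=\prod_{t\in\mathcal T}\left(M(q)\right)^{\chi}Z_\text{PT}(X^t;-q,Q)=\left(\left(M(q)\right)^{\chi}\right)^{m+n\choose n}\prod_{t\in\mathcal T}Z_\text{PT}(X^t;-q,Q),
\ee
which is the claimed formula.

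Since the single-resolution factorisation is already recorded in the text, the corollary is essentially immediate; the only points requiring genuine care are the flop-invariance of $\chi$ (so that one exponent $\chi$ governs every factor) and the identification of the number of crepant resolutions as $\binom{m+n}{n}$. The mildest potential obstacle is keeping the sign convention $q\mapsto -q$ consistent between the MacMahon factor $\left(M(q)\right)^{\chi}$ and $Z_\text{PT}(X^t;-q,Q)$, so that the factors recombine cleanly; but this is a bookkeeping matter of convention rather than a substantive difficulty.
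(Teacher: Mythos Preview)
Your proof is correct and follows essentially the same route as the paper: use the single-resolution factorisation $Z(X^t;-q,Q)=\left(M(q)\right)^{\chi}Z_{\text{PT}}(X^t;-q,Q)$, note that the MacMahon prefactor is common to every resolution, and multiply over the $\binom{m+n}{n}$ crepant resolutions. Your additional remarks on flop-invariance of $\chi$ and on the DT/PT correspondence are more explicit than the paper's treatment, but the core argument is identical.
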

\begin{proof}
By  definition
\be
Z_{\textnormal{tot}}(X;q,Q)=\prod_{t \in \mathcal T}Z(X^t;q,Q).
\ee
We know, by \cite{GKMR}, that the number of crepant resolutions of $X$ is $m+n \choose n$.\\

Given $X^t$ a crepant resolution of $X$, it follows that
\be
Z(X^t;-q,Q)=\left( \prod_{k=1} ^\infty \frac{1}{(1-q^k)^k} \right)^{\chi} \prod_{1\leq i\leq j\leq \chi -1} \prod_{k=1} ^\infty (1-Q_iQ_{i+1}\cdots Q_jq^k)^{-ks_is_{i+1}\cdots s_j}.
\ee
So, when we run the product over  all  crepant resolutions  the term $\left( \prod_{k=1} ^\infty \frac{1}{(1-q^k)^k}\right)^{\chi}=(M(q))^{\chi}$ appears $m+n \choose n$ times. Therefore,
\be
Z_{\textnormal{tot}}(X;-q,Q)=\left( (M(q))^{\chi}\right)^{m+n\choose n}\prod_{t \in \mathcal T}Z_{\text{PT}} (X^t;-q,Q).
\ee
\end{proof}

\subsubsection{Partition functions on $C_{m,n}$ and $cA_{m+n-1}$}

Note that by Proposition 4.1 (1) of \cite{GKMR} the number of triangles on each triangulation of $C_{m,n}$ (which is $m+n$) is equal to the number of triangles in the unique triangulation of $C_{0,m+n}\cong cA_{m+n-1}$. It follows that the factors of the product
\be
\prod_{1 \leq i \leq j \leq \chi-1}(1-Q_i \cdots Q_jq^k)^{-ks_i\cdots s_j}
\ee
are equal up to the powers $s_k$'s; in fact, all such  resolutions have the same number of $Q_i'$s, hence  the summation runs over the same indices $1 \leq i \leq j \leq \chi-1$, only the exponents $s_i$ 
vary according to  the resolution. \\

Now let $X=\mathbb{C}[x,y,z,w]/\langle xy-z^mw^n\rangle$ and $Y=\mathbb{C}[x,y,z,w]/\langle xy-w^{m+n}\rangle$. The partition function of $C_{0,l}\cong cA_{l-1}$ has degree $d=1$ for all positive integer $l$, see Theorem \ref{thm1}. This implies that the partition function of $Y$ is homogeneous of degree 1, so we can write
\be
Z_{\text{PT}} (Y;-q,Q)=\prod_{1 \leq i \leq j \leq \chi - 1} \prod_{k=1}^\infty(1-Q_i\cdots Q_jq^k)^{-k}.
\ee
On the other hand, from Theorem \ref{thm1}, the partition function of $X$ is homogeneous of degree $d$. Therefore,
\be
Z_{\text{PT}} (X;-q,Q)=\left(\prod_{1 \leq i \leq j \leq \chi - 1} \prod_{k=1}^\infty(1-Q_i\cdots Q_jq^k)^{-k}\right)^d.
\ee
So we proved the following corollary of Theorem \ref{thm1}:
\begin{thm}\label{thm2}
Let $X$, $Y$ be singular Calabi--Yau threefolds defined as subsets of $\mathbb{C}^4$ by $X=\mathbb{C}[x,y,z,w]/\langle xy-z^mw^n\rangle$ and $Y=\mathbb{C}[x,y,z,w]/\langle xy-w^{m+n}\rangle$, respectively. Then
\be
Z_{\text{PT}} (X;q,Q)=Z_{\text{PT}} (Y;q,Q)^d,
\ee
where $d$ is the degree of $Z_{\textnormal{tot}}(X;q,Q)$.
\end{thm}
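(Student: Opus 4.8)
The plan is to read off both sides of the claimed identity as explicit infinite products and match them factor by factor, using the homogeneity statement of Theorem~\ref{thm1} as the main input. Write $X = C_{m,n}$ and $Y = C_{0,m+n} \cong cA_{m+n-1}$. The first point I would settle is that the index sets appearing in the two Pandharipande--Thomas products agree. By Proposition~4.1(1) of \cite{GKMR}, every triangulation of $C_{m,n}$ has exactly $m+n$ triangles, which is also the number of triangles in the unique triangulation of $C_{0,m+n}$; consequently both threefolds resolve to a chain of the same number $\chi - 1 = m+n-1$ of $\mathbb{P}^1$'s and share the Euler characteristic $\chi = m+n$. This ensures that the double product $\prod_{1 \leq i \leq j \leq \chi-1}\prod_{k \geq 1}(1 - Q_i\cdots Q_j q^k)$ ranges over identical pairs $(i,j)$ for $X$ and for $Y$, so the two partition functions are assembled from the same elementary factors and can differ only in their exponents.

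Next I would pin down the degree attached to $Y$. Substituting $m \mapsto 0$ and $n \mapsto m+n$ into the degree formula of Theorem~\ref{thm1} gives $\frac{(m+n-2)!}{0!\,(m+n)!}\bigl((m+n)^2 - (m+n)\bigr)$, which simplifies to $1$. Hence $Z_{\text{tot}}(Y)$ is homogeneous of degree one; and because $Y$ admits a single crepant resolution, $Z_{\text{tot}}(Y)$ agrees with the Pandharipande--Thomas function $Z_{\text{PT}}(Y)$ once the MacMahon prefactor isolated in Corollary~\ref{cor1} is stripped off. Definition~\ref{homogeneous}, in the corrected form explained in the Remark, then delivers the closed expression $Z_{\text{PT}}(Y;-q,Q) = \prod_{1 \leq i \leq j \leq \chi-1}\prod_{k \geq 1}(1 - Q_i\cdots Q_j q^k)^{-k}$, which I abbreviate as $B(q,Q)$.

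For $X$ I would invoke Theorem~\ref{thm1} directly: the total partition function $Z_{\text{tot}}(X) = \prod_{t \in \mathcal{T}} Z(X^t)$ is homogeneous of degree $d$. Removing the MacMahon factors (again via Corollary~\ref{cor1}) identifies its homogeneous part with the total Pandharipande--Thomas function $Z_{\text{PT}}(X) = \prod_{t \in \mathcal{T}} Z_{\text{PT}}(X^t)$, so Definition~\ref{homogeneous} forces $Z_{\text{PT}}(X;-q,Q) = B(q,Q)^d$. Comparing with the previous paragraph and using the coincidence of index sets established at the outset, I conclude $Z_{\text{PT}}(X;-q,Q) = Z_{\text{PT}}(Y;-q,Q)^d$, which is the asserted equality after the formal relabelling $q \mapsto -q$ fixed by the sign conventions of the text.

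The step I expect to be delicate is not any single computation but the bookkeeping that legitimises the term-by-term comparison. Concretely, one must be sure that the formal variables $Q_i$ are identified consistently across $X$ and $Y$ --- this is precisely the left-to-right identification of interior edges in (\ref{idenQs}), carried out without inverting any variable --- and that the MacMahon contribution is handled uniformly, since it is the content of the Remark that homogeneity is declared only after that factor is excised. Once both sides are presented over the common index set $1 \leq i \leq j \leq m+n-1$ with the variables matched, the identity $Z_{\text{PT}}(X) = Z_{\text{PT}}(Y)^d$ is immediate, and the interpretation of $Z_{\text{PT}}(X)$ as the product of the Pandharipande--Thomas functions over all crepant resolutions is what makes the homogeneity of Theorem~\ref{thm1} applicable.
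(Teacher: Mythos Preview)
Your proposal is correct and follows essentially the same route as the paper's own argument: both use Proposition~4.1(1) of \cite{GKMR} to match the index sets, compute that $C_{0,m+n}$ has degree $1$, and then invoke Theorem~\ref{thm1} for $C_{m,n}$ to read off the identity. Your write-up is in fact slightly more careful than the paper's in tracking the MacMahon prefactor via Corollary~\ref{cor1} and in spelling out the degree-$1$ computation for $Y$, but the logical skeleton is the same.
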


By Corollary \ref{cor1} and Corollary \ref{thm2} the partition function of $X=\mathbb{C}[x,y,z,w]/\langle xy-z^mw^n \rangle$ is completely described by
\begin{equation} \label{expressao}
Z_{\textnormal{tot}}(X;-q,Q)=\left( (M(q))^{\chi}\right)^{m+n\choose m} Z_{\text{PT}} (Y;-q,Q)^d,
\end{equation}
where $Y=\mathbb{C}[x,y,z,w]/\langle xy-w^{m+n}\rangle$. \\

Thus, we have explicitly shown a relation between the topological string partition function on a Calabi--Yau threefold of the form $X=\mathbb{C}[x,y,z,w]/\langle xy-z^mw^n\rangle$, hence a generalised conifold $C_{m,n}$, and one of the form $Y=\mathbb{C}[x,y,z,w]/\langle xy-w^{m+n}\rangle$, a compound du Val singularity $cA_{m+n-1}$. As we have pointed out the computation on $C_{0,m+n} \cong cA_{m+n-1}$ is simpler because $C_{m,n}$ has ${m+n \choose m}$ resolutions whereas  $cA_{m+n-1}$ has only one. \\

\section{Example}
We now illustrate the different ways to calculate partition functions exploring  one concrete example in full details. 
Consider $X=C_{2,3} :=\mathbb{C} [x,y,z,w]/\langle xy-z^2w^3\rangle$, so that the crepant resolutions of $X$ are represented by
$X_1, \dots, X_{10}$ as depicted on Figure \ref{resolutions-C23}, 
where a green line represents a resolution by $\mathcal O(-1) \oplus \mathcal O(-1) \to \mathbb {P}^1$ and a red line represents a resolution by $\mathcal O(-2) \oplus \mathcal O \to \mathbb {P}^1 $.
We can  compute the partition function of the singular threefold $X$ in two ways: either directly from definition  (\ref{define}), or through  expression \eqref{expressao} that uses the threefold $C_{0,5}$. \\

Firstly, to calculate directly from the definition, we  write down  the partition function for each crepant resolution. The 10 crepant resolutions  and their corresponding Pandharipande--Thomas partition functions are:

\begin{figure}\label{draw}
\begin{center}
\includegraphics[width=0.8\textwidth]{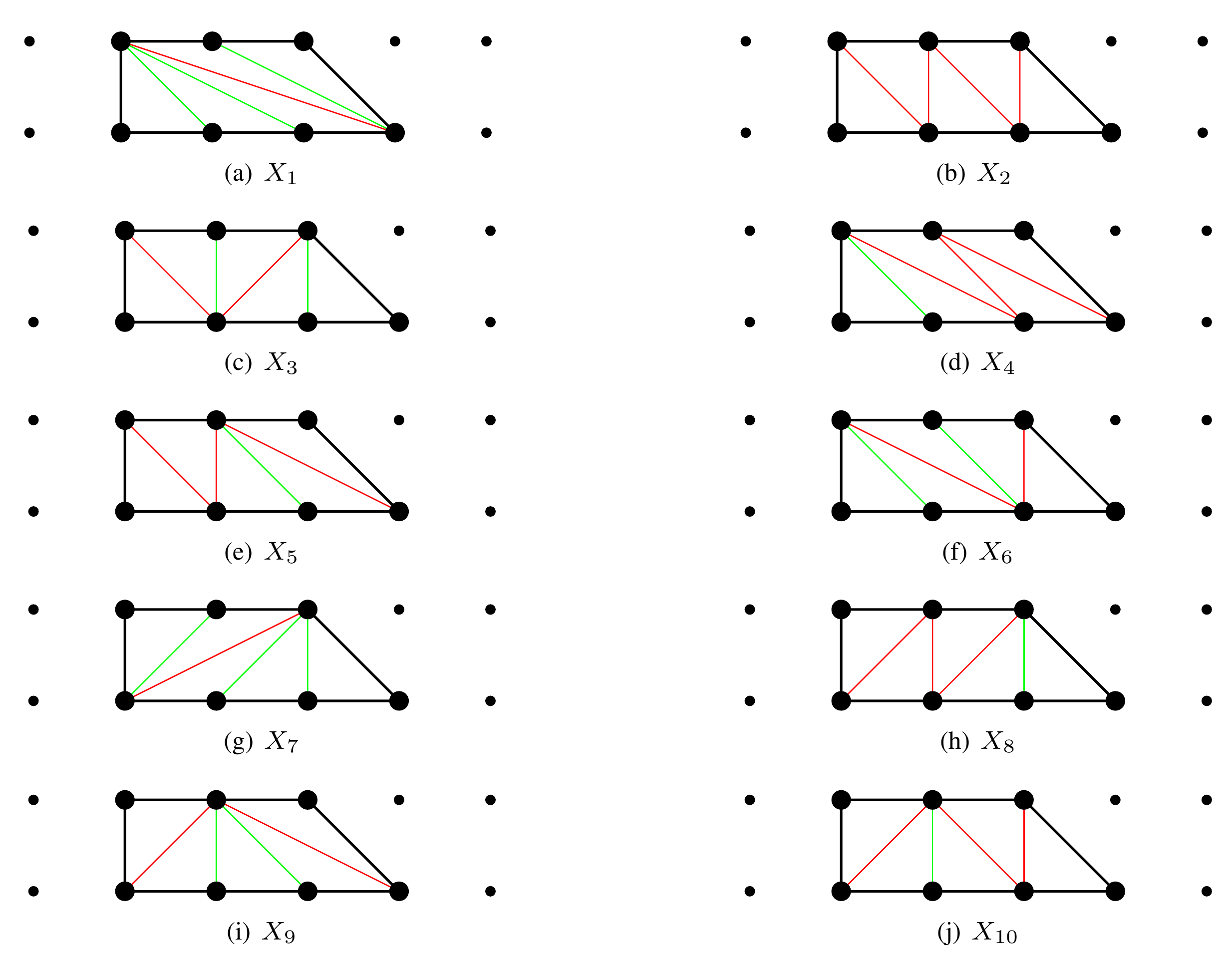}
\caption{Crepant resolutions of $C_{2,3}$}
\label{resolutions-C23}
\end{center}
\end{figure}

\begin{eqnarray*}
Z_\text{PT}(X_1,-q,Q)&=&\prod_{k=1} ^{\infty} (1-Q_1q^k)^{-k} (1-Q_1Q_2q^k)^{-k} (1-Q_1Q_2Q_3q^k)^{k} (1-Q_1Q_2Q_3Q_4q^k)^{k} \\
& & (1-Q_2q^k)^{-k} (1-Q_2Q_3q^k)^{k} (1-Q_2Q_3Q_4q^k)^{k} (1-Q_3q^k)^{k} (1-Q_3Q_4q^k)^{k} (1-Q_4q^k)^{-k}
\end{eqnarray*}

\begin{eqnarray*}
Z_\text{PT}(X_2,-q,Q)&=&\prod_{k=1} ^{\infty} (1-Q_1q^k)^{k} (1-Q_1Q_2q^k)^{-k} (1-Q_1Q_2Q_3q^k)^{k} (1-Q_1Q_2Q_3Q_4q^k)^{-k} \\
& & (1-Q_2q^k)^{k} (1-Q_2Q_3q^k)^{-k} (1-Q_2Q_3Q_4q^k)^{k} (1-Q_3q^k)^{k} (1-Q_3Q_4q^k)^{-k} (1-Q_4q^k)^{k}
\end{eqnarray*}

\begin{eqnarray*}
Z_\text{PT}(X_3,-q,Q)&=&\prod_{k=1} ^{\infty} (1-Q_1q^k)^{k} (1-Q_1Q_2q^k)^{k} (1-Q_1Q_2Q_3q^k)^{-k} (1-Q_1Q_2Q_3Q_4q^k)^{-k} \\
& & (1-Q_2q^k)^{-k} (1-Q_2Q_3q^k)^{k} (1-Q_2Q_3Q_4q^k)^{k} (1-Q_3q^k)^{k} (1-Q_3Q_4q^k)^{k} (1-Q_4q^k)^{-k}
\end{eqnarray*}

\begin{eqnarray*}
Z_\text{PT}(X_4,-q,Q)&=&\prod_{k=1} ^{\infty} (1-Q_1q^k)^{-k} (1-Q_1Q_2q^k)^{k} (1-Q_1Q_2Q_3q^k)^{-k} (1-Q_1Q_2Q_3Q_4q^k)^{k} \\
& & (1-Q_2q^k)^{k} (1-Q_2Q_3q^k)^{-k} (1-Q_2Q_3Q_4q^k)^{k} (1-Q_3q^k)^{k} (1-Q_3Q_4q^k)^{-k} (1-Q_4q^k)^{k}
\end{eqnarray*}

\begin{eqnarray*}
Z_\text{PT}(X_5,-q,Q)&=&\prod_{k=1} ^{\infty} (1-Q_1q^k)^{k} (1-Q_1Q_2q^k)^{-k} (1-Q_1Q_2Q_3q^k)^{-k} (1-Q_1Q_2Q_3Q_4q^k)^{k} \\
& & (1-Q_2q^k)^{k} (1-Q_2Q_3q^k)^{k} (1-Q_2Q_3Q_4q^k)^{-k} (1-Q_3q^k)^{-k} (1-Q_3Q_4q^k)^{k} (1-Q_4q^k)^{k}
\end{eqnarray*}

\begin{eqnarray*}
Z_\text{PT}(X_6,-q,Q)&=&\prod_{k=1} ^{\infty} (1-Q_1q^k)^{-k} (1-Q_1Q_2q^k)^{k} (1-Q_1Q_2Q_3q^k)^{k} (1-Q_1Q_2Q_3Q_4q^k)^{-k} \\
& & (1-Q_2q^k)^{k} (1-Q_2Q_3q^k)^{k} (1-Q_2Q_3Q_4q^k)^{-k} (1-Q_3q^k)^{-k} (1-Q_3Q_4q^k)^{k} (1-Q_4q^k)^{k}
\end{eqnarray*}

\begin{eqnarray*}
Z_\text{PT}(X_7,-q,Q)&=&\prod_{k=1} ^{\infty} (1-Q_1q^k)^{-k} (1-Q_1Q_2q^k)^{k} (1-Q_1Q_2Q_3q^k)^{k} (1-Q_1Q_2Q_3Q_4q^k)^{k} \\
& & (1-Q_2q^k)^{k} (1-Q_2Q_3q^k)^{k} (1-Q_2Q_3Q_4q^k)^{k} (1-Q_3q^k)^{-k} (1-Q_3Q_4q^k)^{-k} (1-Q_4q^k)^{-k}
\end{eqnarray*}

\begin{eqnarray*}
Z_\text{PT}(X_8,-q,Q)&=&\prod_{k=1} ^{\infty} (1-Q_1q^k)^{k} (1-Q_1Q_2q^k)^{-k} (1-Q_1Q_2Q_3q^k)^{k} (1-Q_1Q_2Q_3Q_4q^k)^{k} \\
& & (1-Q_2q^k)^{k} (1-Q_2Q_3q^k)^{-k} (1-Q_2Q_3Q_4q^k)^{-k} (1-Q_3q^k)^{k} (1-Q_3Q_4q^k)^{k} (1-Q_4q^k)^{-k}
\end{eqnarray*}

\begin{eqnarray*}
Z_\text{PT}(X_9,-q,Q)&=&\prod_{k=1} ^{\infty} (1-Q_1q^k)^{k} (1-Q_1Q_2q^k)^{k} (1-Q_1Q_2Q_3q^k)^{k} (1-Q_1Q_2Q_3Q_4q^k)^{-k} \\
& & (1-Q_2q^k)^{-k} (1-Q_2Q_3q^k)^{-k} (1-Q_2Q_3Q_4q^k)^{k} (1-Q_3q^k)^{-k} (1-Q_3Q_4q^k)^{k} (1-Q_4q^k)^{k}
\end{eqnarray*}

\begin{eqnarray*}
Z_\text{PT}(X_{10},-q,Q)&=&\prod_{k=1} ^{\infty} (1-Q_1q^k)^{k} (1-Q_1Q_2q^k)^{k} (1-Q_1Q_2Q_3q^k)^{-k} (1-Q_1Q_2Q_3Q_4q^k)^{k} \\
& & (1-Q_2q^k)^{-k} (1-Q_2Q_3q^k)^{k} (1-Q_2Q_3Q_4q^k)^{-k} (1-Q_3q^k)^{k} (1-Q_3Q_4q^k)^{-k} (1-Q_4q^k)^{k} \text{.}
\end{eqnarray*}

After calculating  these 10  partition functions  individually for the crepant resolutions of $X$ we take their product, thus obtaining:

\begin{eqnarray*} 
Z_\text{PT}(C_{2,3};-q,Q)&=&\prod_{k=1} ^{\infty} (1-Q_1q^k)^{2k} (1-Q_1Q_2q^k)^{2k} (1-Q_1Q_2Q_3q^k)^{2k} (1-Q_1Q_2Q_3Q_4q^k)^{2k} \\
& & (1-Q_2q^k)^{2k} (1-Q_2Q_3q^k)^{2k} (1-Q_2Q_3Q_4q^k)^{2k} (1-Q_3q^k)^{2k} (1-Q_3Q_4q^k)^{2k} (1-Q_4q^k)^{2k}\\
&=& \Bigg( \prod_{k=1} ^{\infty} (1-Q_1q^k)^{-k} (1-Q_1Q_2q^k)^{-k} (1-Q_1Q_2Q_3q^k)^{-k} (1-Q_1Q_2Q_3Q_4q^k)^{-k} \\
& & (1-Q_2q^k)^{-k} (1-Q_2Q_3q^k)^{-k} (1-Q_2Q_3Q_4q^k)^{-k} (1-Q_3q^k)^{-k} (1-Q_3Q_4q^k)^{-k} (1-Q_4q^k)^{-k}\Bigg)^{-2} 
\text{.}
\end{eqnarray*} 

Secondly, using  expression \eqref{expressao} we consider the auxiliary threefold  $Y = C_{0, 5}:=\mathbb{C}[x,y,z,w]/\langle xy-w^5\rangle$ 
which has a   unique crepant resolution, see Figure \ref{resolutions-C05}, and  partition function given by:

\begin{eqnarray*}
Z_\text{PT}(C_{0,5};-q,Q)&=&\prod_{k=1} ^{\infty}(1-Q_1q^k)^{-k} (1-Q_1Q_2q^k)^{-k} (1-Q_1Q_2Q_3q^k)^{-k} (1-Q_1Q_2Q_3Q_4q^k)^{-k} \\
& & (1-Q_2q^k)^{-k} (1-Q_2Q_3q^k)^{-k} (1-Q_2Q_3Q_4q^k)^{-k} (1-Q_3q^k)^{-k} (1-Q_3Q_4q^k)^{-k} (1-Q_4q^k)^{-k}.
\end{eqnarray*}

In this simple example is easy to verify Theorem \ref{thm2} which states that the Pandharipande--Thomas partition function of $C_{2,3}$ and $C_{0,5}$ are related by the following equation

\[
Z_{\text{PT}} (C_{2,3};q,Q)=Z_{\text{PT}} (C_{0,5};q,Q)^d,
\]

where $d=\frac{(m+n-2)!}{m!n!}(m^2-m+n^2-n-2mn)=-2$.

\begin{figure}
\begin{center}
\includegraphics[width=0.4\textwidth]{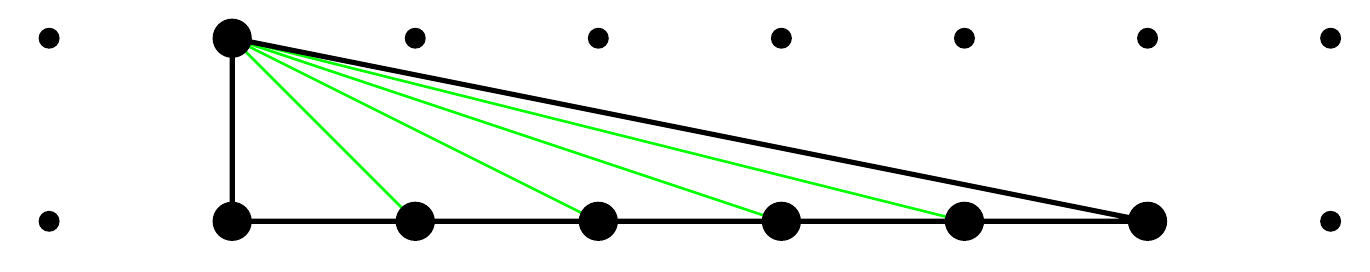}
\caption{Crepant resolution of $C_{0,5}$}
\label{resolutions-C05}
\end{center}
\end{figure}

\section{Final remarks}\label{FR}

Although the other compound du Val singularities $cD$ and $cE$ type are not toric varieties, it could be expected that 
in these cases there exists as well a dual variety in 4D with equivalent topological string partition function. We plan to address this issue in future work. \\

D. Maulik gave a complete solution for the reduced Gromov-Witten theory of $A_n$ singularities, for any genus and arbitrary descendent insertions  in \cite{Ma}. He also studied and described the threefold $A_n \times \mathbb P^1$. Now, if instead of $\mathbb P^1$ we have the complex plane $\mathbb C$, that is, we want to describe Gromov-Witten theory on $A_n \times \mathbb C$, then any curve gets contracted under the projection to $\mathbb C$, so it is basically the same as the Gromov-Witten theory of $A_n$ with a Hodge class inserted. The solution on the threefold $A_n \times \mathbb C$ can also be obtained from the $A_n \times \mathbb P^1$ studied in \cite{Ma} by restricting to the $\beta=0$ constant map case (we are grateful to Davesh Maulik for explaining this point to us).

Hopefully the proposed duality extends to arbitrary descendent insertions (or, from a physical point of view, equivalence of correlation functions of operators described by descendent insertions) and not only to the partition function. Then, using the correspondence between $C_{m,n}$ and $A_{m+n-1} \times \mathbb C$, in principle, we should be able to write the complete solution for Gromov-Witten theory on generalised conifolds $C_{m,n}$. In other words, the Gromov-Witten invariants on $C_{m,n}$ could be equivalently computed on the simpler variety $A_{m+n-1} \times \mathbb C$. We will tackle this point in future work.\\

As suggested by Maulik the Gromov-Witthen theory on  $A_n \times \mathbb C$ is basically the same as the Gromov-Witten theory of $A_n$ with a Hodge class inserted. Thus, we would like to emphasise that the duality that we have been describing is essentially between the six dimensional variety $C_{m,n}$ and the four dimensional one $A_{m+n-1}$, \\


\begin{acknowledgments}
The idea of comparing the partition functions on $C_{m,n}$ and $C_{0,m+n}$ came from a discussion of Piotr Su\l kowski and E.G. We are thankful to P. Su\l kowski whose suggestions inspired this work. We thank Davesh Maulik for clarifying the relation between Gromov-Witten theory on the varieties: $A_m \times \mathbb C$, $A_m$ and $A_m \times \mathbb P^1$.\\
This collaboration started during a visit of  E. Gasparim to the department of Physics of UNAB in Santiago. Our special thanks to Per Sundell for his hospitality  and support under  CONICYT  grant   DPI 20140115.\\
B. Suzuki would like to thank Conicyt for the financial support through Beca Doctorado Nacional - Folio  21160257. C. A. B. Varea was partially supported by the Vice Rector\'ia  de Investigaci\'on y Desarrollo Tecnol\'ogico of UCN, Chile. The work of A. Torres-Gomez is funded by Conicyt grant PAI/ACADEMIA 79160014. Moreover, A. Torres-Gomez was partially supported by the National Research Foundation of Korea through the grant NRF-2014R1A6A3A04056670 at the final stage of this work.\\
Finally, we are very grateful to the referee for the careful reading and valuable suggestions.
\end{acknowledgments}


\end{document}